\renewcommand{\thefootnote}{\fnsymbol{footnote}}
\let\csname equation*\endcsname\relax
\let\csname endequation*\endcsname\relax
\newtheorem{thm}{Proposition}
\newtheorem{lmm}{Lemma}
\newtheorem{cor}{Corollary}
\begin{document}

\setlength{\textheight}{8.0truein}

\runninghead{Detection loophole attacks on
  semi-device-independent quantum and classical protocols}
            {M. Dall'Arno, E. Passaro, R. Gallego,
              M. Paw{\l}owski, A. Ac\'in}

\normalsize\textlineskip
\thispagestyle{empty}
\setcounter{page}{1}

%\copyrightheading{Vol.}{No.}{Year}{Page Nos.}
\copyrightheading{0}{0}{2003}{000--000}

\vspace*{0.88truein}

\alphfootnote

\fpage{1}

\centerline{\bf Detection loophole attacks on
  semi-device-independent quantum and classical protocols}

\vspace*{0.37truein}

\centerline{\footnotesize Michele
  Dall'Arno\footnote{cqtmda@nus.edu.sg}}

\vspace*{0.015truein}

\centerline{\footnotesize\it ICFO-Institut de Ciencies
  Fotoniques, Mediterranean Technology Park}
\baselineskip=10pt \centerline{\footnotesize\it
  Castelldefels (Barcelona), 08860, Spain}

\baselineskip=12pt

\centerline{\footnotesize\it Graduate School of Information
  Science, Nagoya University}
\baselineskip=10pt \centerline{\footnotesize\it Chikusa-ku,
  Nagoya, 464-8601, Japan}

\baselineskip=12pt

\centerline{\footnotesize\it Centre for Quantum Technologies, National
  University of Singapore}
\baselineskip=10pt \centerline{\footnotesize\it 3 Science Drive 2,
  117543 Singapore, Singapore}

\vspace*{10pt}

\centerline{\footnotesize Elsa Passaro}

\vspace*{0.015truein}

\centerline{\footnotesize\it ICFO-Institut de Ciencies
  Fotoniques, Mediterranean Technology Park}
\baselineskip=10pt \centerline{\footnotesize\it
  Castelldefels (Barcelona), 08860, Spain}

\vspace*{10pt}

\centerline{\footnotesize Rodrigo Gallego}

\vspace*{0.015truein}

\centerline{\footnotesize\it ICFO-Institut de Ciencies
  Fotoniques, Mediterranean Technology Park}
\baselineskip=10pt \centerline{\footnotesize\it
  Castelldefels (Barcelona), 08860, Spain}

\baselineskip=12pt \centerline{\footnotesize\it Dahlem
  Center for Complex Quantum Systems, Freie Universit\"at
  Berlin,}
\baselineskip=10pt \centerline{\footnotesize\it 14195
  Berlin, Germany}

\vspace*{10pt}

\centerline{\footnotesize Marcin Paw{\l}owski}

\vspace*{0.015truein}

\centerline{\footnotesize\it Institute of Theoretical
  Physics and Astrophysics, University of Gdansk,}
\baselineskip=10pt \centerline{\footnotesize\it 80-952
  Gdansk, Poland}

\baselineskip=12pt \centerline{\footnotesize\it Department
  of Mathematics, University of Bristol}
\baselineskip=10pt \centerline{\footnotesize\it Bristol BS8
  1TW, United Kingdom}

\vspace*{10pt}

\centerline{\footnotesize 
Antonio Ac\'in}

\vspace*{0.015truein}

\centerline{\footnotesize\it ICFO-Institut de Ciencies
  Fotoniques, Mediterranean Technology Park}
\baselineskip=10pt \centerline{\footnotesize\it
  Castelldefels (Barcelona), 08860, Spain}

\baselineskip=12pt \centerline{\footnotesize\it
  ICREA-Institucio Catalana de Recerca i Estudis
  Avan\c{c}ats}
\baselineskip=10pt \centerline{\footnotesize\it Lluis
  Companys 23, 08010 Barcelona, Spain}

\vspace*{0.225truein}

\publisher{(received date)}{(revised date)}

\vspace*{0.21truein}

\abstracts{
  Semi-device-independent quantum protocols
  realize information tasks -- e.g. secure key distribution,
  random access coding, and randomness generation -- in a
  scenario where no assumption on the internal working of
  the devices used in the protocol is made, except their
  dimension. These protocols offer two main advantages:
  first, their implementation is often less demanding than
  fully-device-independent protocols. Second, they are more
  secure than their device-dependent counterparts. Their
  classical analogous is represented by random access codes,
  which provide a general framework for describing one-sided
  classical communication tasks. We discuss conditions under
  which detection inefficiencies can be exploited by a
  malicious provider to fake the performance of
  semi-device-independent quantum and classical protocols --
  and how to prevent it.}
{}{}
\vspace*{10pt}

\keywords{semi-device-independent protocols, random access codes, detection loophole}
\vspace*{3pt}
\communicate{}

\vspace*{1pt}\textlineskip

\section{Introduction}      
\setcounter{footnote}{0}
\renewcommand{\thefootnote}{\alph{footnote}}
\noindent
In the last decades, the distinguishing properties of
quantum theory have been exploited to accomplish tasks which
are unfeasible in classical theory~\cite{NC00}. For example,
protocols were proposed for secure quantum key distribution
(QKD)~\cite{BB84, E91}, quantum
teleportation~\cite{BBCJPW93, BPMEWZ97}, and quantum
randomness generation (QRG)~\cite{ROT94, FSSBM07}. 
With no exceptions, the first protocols to
be proposed were device dependent, namely their success
critically relies on the agreement between the description
of the setup and its implementation. Since this hypothesis
is never exactly fulfilled, in experimental implementations
a plethora of related problems arises. In fact, the
unavoidable mismatch between theoretical requirements of
security proofs and experimental implementations has
recently been exploited to hack quantum cryptography
systems~\cite{LWWESM10, GLLSKM11, XQL10, JWLWEMML11}.

Subsequently, fully-device-independent protocols were
proposed, in a scenario where devices are completely
uncharacterized. Celebrated examples of device-independent
protocols include schemes for randomness
generation~\cite{PAMBMMOHLMM10, Col07, CK11} or secure
quantum key distribution~\cite{ABGMPS07,
  PABGSS09,BHK05}. While first security proofs of these
protocols were partial and required some extra assumptions,
such as, for instance, to assume that devices were
memoryless, general security proofs have recently been
obtained in~\cite{VV12, RUV12}. Unfortunately, these
security proofs turn out to be quite demanding in terms of
noise (in fact, the proof in~\cite{RUV12} does not tolerate
any noise). Moreover, from an implementation point of view,
device-independent protocols are quite challenging, as they
require a detection-loophole-free Bell inequality violation
between two distant parties.

Recently, intermediate solutions between device-dependent
and fully device-independent protocols have been proposed.
For example, in the so-called measurement-device-independent
(MDI) QKD~\cite{BP12,LCQ12}, the measurement device is left
uncharacterized. While offering a weaker form of security
than their device-independent counterparts, the
implementation of MDI protocols is much easier as they can
tolerate higher losses and their performance is comparable
with that of standard schemes. Security of
measurement-device-independent QKD protocols under detection
loophole hacks is discussed in Ref.~\cite{LCQ12}, and
experimental implementations are reported in
Refs.~\cite{TLXQQL14, LCWLSWCYLLMPFPZP13}.

Another example are the so-called semi-device-independent
quantum protocols, where only the dimension of the exchanged
system is assumed while the devices are uncharacterized. At
the price of upper bounding the dimension of the system,
secure QKD is possible~\cite{PB11} in a measure and prepare
scheme, and semi-device-independent protocols for QRG are
also known~\cite{HPZGZ11}. The importance of semi-device
independent protocols relies on the fact that they are
secure against a wide class of attacks, where the
eavesdropper can alter the characteristics of systems and
measurements, but can not change their dimensionality. For
example, in the attacks proposed in Refs.~\cite{LWWESM10,
  GLLSKM11, JWLWEMML11}, by flashing bright light into the
devices, an eavesdropper gets control of detection
efficiency for each measurement, but there is no way for him
to make the devices encode the information in a system of a
higher dimensionality.
 
Semi-device-independent protocols are based on the quantum
certification provided by dimension witnesses for a fixed
dimension~\cite{GBHA10}. While the recently developed
formalism of dimension witness~\cite{GBHA10, HGMBAT11,
  ABCB11} allows to lower bound the dimension of a system in
a device-independent fashion, a device-independent upper
bound is prohibited by fundamental arguments. Indeed, for
any given system, no experiment is guaranteed to exploit all
the available degrees of freedom. Then, only some a priori
knowledge of the underlying physical model can provide a
(device dependent) upper bound on the dimension of the
system. On the other hand, it is worth to stress that in
semi-device-independent protocols no additional assumption
on the model needs to be done. For example, for information
encoded in the polarization of a photon, only the
unsurprising assumption that the system is two dimensional
is required, while no assumption is required on what the
polarization directions are.

The classical analogous of semi-device-independent quantum
protocols -- namely, the case in which the exchanged system
is classical -- is known as the problem of random access
codes (RACs)~\cite{ALMO08}. In the context of RACs, the aim
of two distant parties is to optimally perform some
one-sided communication task under a constraint on the
amount of classical information exchanged.

Despite their security, real world implementations of
semi-device-independent (quantum or classical) protocols are
subject to detection loophole (DL) attacks~\footnote{The
  problem of DL attack on semi-device-independent protocols
  shares analogies with that of the robustness to loss of
  device-independent dimension witness (DIDWs)
  \cite{GBHA10,HGMBAT11,ABCB11,BQB13}, addressed in
  \cite{DPGA12}. Nevertheless, while in the former the task
  is to exploit non-ideal detection efficiencies to fake the
  result of a protocol producing an input/output statistics
  which would be forbidden in the absence of DL, in the
  latter the task is to devise conditions under which
  dimension witnessing is indeed possible even in the
  presence of loss.}  -- as happens for any
fully-device-independent protocol. In this attack, a
malicious provider exploits non-ideal detection efficiencies
to skew the statistics of the experiment and ultimately
faking its result. The aim of this work is to provide
conditions under which DL attacks are harmless in faking the
result of a semi-device-independent (quantum or classical)
protocol.

The paper is organized as follows. In Sec.~\ref{sect:dl} we
introduce DL attacks and present our main results. In
Sec.~\ref{sect:quantprot} we derive conditions under which
DL attacks on semi-device-independent quantum protocol are
harmless, in the general framework where only the statistics
of the protocol is taken into account. In
Sec.~\ref{sect:classprot} we address the problem of the
certification of semi-device-independent classical
protocols, in the framework of RACs. Finally we summarize
our results and delineate some further developments in
Sec.~\ref{sect:conclusion}.

\section{Detection loophole attack}
\label{sect:dl}
\noindent
We start by presenting the general structure of the
semi-device-independent (quantum and classical) protocols
considered in this work. The existing quantum protocols
for QKD~\cite{PB11} and QRG~\cite{HPZGZ11}, as well as
classical RACs, are examples of this structure. We
consider protocols in which two distant parties, Alice and
Bob, have access to uncorrelated random number
generators\footnote{Notice that the assumption of
  uncorrelation is fulfilled by a broad class of
  protocols. Indeed, in any semi-device-independent setup
  one necessarily has to assume that the devices are
  shielded -- namely they can not communicate except
  through message $A$. Then to have shared (classical or
  quantum) randomness one is forced to introduce a trusted
  third party random generator, or to allow for infinite
  local memory on each device storing previously
  distributed randomness.}. 
For each round, we
denote by $j$ ($i$) the random variable generated by
Alice's (Bob's) generator and with $q_j$ ($p_i$) its
probability distribution. As said, these probability
distributions are independent. Random variables $j$ and
$i$ represent the strategy that Alice and Bob apply,
respectively. This scheme is depicted in
Fig.~\ref{fig:setup}.

\begin{figure}[htbp]
  \centerline{\epsfig{file=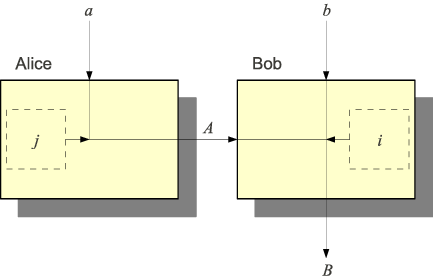, width=8.2cm}}
  \vspace*{13pt} \fcaption{\label{fig:setup} Scheme of a
    generic semi-device-independent (quantum or classical)
    protocol. Two distant parties Alice and Bob are provided
    a black box each (bold-line boxes in the
    Figure). Alice's and Bob's boxes receive classical input
    $a$ and $b$ respectively. Each box is allowed to use a
    classical random generator (dashed-line boxes), which
    outcome -- $j$ for Alice's box and $i$ for Bob's -- is
    not accessible to the parties but can influence the
    outcome of the box. Alice's box sends a (quantum or
    classical) message $A$ to Bob, that finally outputs
    classical message $B$.}
\end{figure}

In each run, Alice and Bob get classical inputs $a$ and $b$
respectively. Alice sends a message $A$ -- which may be
classical or quantum -- to Bob, who then returns a classical
value $B$. Finally they collect the statistics of several
runs (the asymptotic case is always considered), obtaining
the conditional probability distribution $P(B|a,b)$ of
outcome $B$ given inputs $a$ and $b$, namely
\begin{align}\label{eq:stat}
  P(B|a,b) := \sum_{i,j,A} p_i q_j P_i(B|A,b) P_j(A|a).
\end{align}
It is important to stress that -- as Eq.~\eqref{eq:stat}
clearly shows -- access is granted only to the inputs $a,b$
and the output $B$, while no knowledge of the internal
behavior of the black boxes (including the random variables
$i,j$) and of the message $A$ is provided. The goal is to
exploit the correlations between the two parties,
encapsulated by $P(B|a,b)$ to solve an information task,
e.g. to distribute a secure key or generate random numbers.

When studying DL attacks, we assume that for each round of
the experiment Alice or Bob can claim that their ``detector
did not click'', and in this case this round of the
experiment is discarded from the statistics. In general,
Alice’s box can decide whether to send state $A$ to Bob
after receiving her input $a$ and random variable $j$, while
Bob's box after receiving his input $b$, the message $A$ and
random variable $i$. Thus, the detection efficiencies,
i.e. the probabilities that the detector clicks, are denoted
with $\eta_j(a)$ for Alice and $\eta_i(A,b)$ for Bob. Notice
that these probabilities cannot be estimated as they depend
on the variables $i$ and $j$ internal to the devices. The
conditional probability distribution of outcome $B$ given
inputs $a$ and $b$ in the presence of a DL attack is given
by
\begin{align}\label{eq:pdel1}
  P_{DL}(B|a,b) := \frac {\sum_{i,j,A} p_i q_j \eta_i(A,b)
    \eta_j(a) P_i(B|A,b) P_j(A|a)} {\sum_{i,j,A} p_i q_j
    \eta_i(A,b) \eta_j(a) P_j(A|a)}.
\end{align}
We use the suffix DL whenever a distribution is obtained
resorting to DL attack. Without loss of generality, we are
assuming that for every input $a,b$ there is a non-zero
probability of click, namely denominator in
Eq.~\eqref{eq:pdel1} is strictly larger than $0$ for any $a$
and $b$. Indeed, if this is not the case, namely if there
exist $a$ and $b$ such that the corresponding probability of
click is zero, one can discard their occurrences from the
statistics without affecting the success of the protocol.

Notice that whether Alice uses DL is not
relevant\footnote{This may be no more true if other
  constraints are introduced, since in this case the sets of
  distributions $P(A|a)$ and $P_{DL}(A|a)$ attainable by
  Alice can be different. For example, suppose that Alice is
  computationally constrained to prepare message $A$ in time
  polynomial in the size of $a$. On the one hand, without
  resorting to DL it is impossible to obtain the
  distribution $P(A|a) = \delta_{A,f(a)}$, with $f(a)$ some
  NP-hard function (as long as we assume that $P \neq
  NP$). On the other hand, exploiting DL Alice can randomly
  choose $A$ and check in polynomial time whether $A =
  f(a)$, clicking only in this case.}, since any settings
she can prepare with DL can also clearly be achieved without
resorting to it, so Eq.~\eqref{eq:pdel1} can be simplified
as
\begin{align}\label{eq:pdel2}
  P_{DL}(B|a,b) = \frac {\sum_{i,A} p_i \eta_i(A,b)
    P_i(B|A,b) P(A|a)} {\sum_{i,A} p_i \eta_i(A,b)
    P(A|a)},
\end{align}
where $P(A|a) := \sum_j q_j \eta_j(a) P_j(A|a)/\sum_j q_j
\eta_j(a)$.

Independently of the task to be realized, all the known
examples of semi-device-independent quantum protocols are
based on the quantum certification provided by dimension
witnesses~\cite{GBHA10} or, in other words, on the fact
that, for a fixed dimension of the exchanged system $A$,
there are quantum distributions $P(B|a,b)$ that cannot be
attained when system $A$ is classical - a system is
classical if the states in which it can be prepared are
pairwise commuting. This quantum certification plays here
the same role as Bell violations for fully
device-independent protocols. Our purpose is then to
understand how a DL attack can mimic correlations that are
intrinsically quantum exploiting the losses in the
implementation. That is, rather than analyzing the effect of
losses for a given quantum protocol, we study situations in
which the observed correlations are useless for any quantum
protocol. This is analogous to what is done when studying
the detection loophole for Bell inequalities.

On the other hand, for classical protocols such a general
approach is obviously not possible. However, when
addressing the problem of classical RACS, one is usually
interested in maximizing some figure of merit related to
the particular communication task, such as the worst case
or average probability of correct detection. In this work
we will focus on the former - being the latter related by
Yao's principle~\cite{ALMO08,Yao77} -- and we will devise
conditions under which it can not be improved resorting to
DL attack.

\section{Certification of semi-device-independent quantum protocols}
\label{sect:quantprot}
\noindent
In this Section, we focus on semi-device-independent quantum
protocols, namely where the exchanged system $A$ is
quantum. As mentioned, the success of
semi-device-independent quantum protocols depends on the
generated statistics. Usually, for a given protocol, a large
enough value of a particular function of such statistics
ensures the success of the protocol. For instance, the
protocol in~\cite{PB11} is secure only when it is assumed
that the dimension of the measured systems is two and a
large value of a dimension witness is observed. Yet, in
general, a necessary condition for the successful
performance of any protocol is the ability to discriminate
whether the source is intrinsically quantum or it can be
described as a classical distribution, building only on the
knowledge of the conditional probability distribution
$P(B|a,b)$.  That is, it is necessary to certify that the
observed correlations cannot be explained classically and,
therefore, are potentially useful for quantum protocols
without classical analogue. The advantage of this approach
is that it allows one to evaluate necessary conditions for
security irrespectively of the particular protocol
considered. Indeed, finding a DL attack able to fake an
intrinsically quantum distribution by exploiting detection
inefficiencies makes the observed correlations useless for
any protocol. In this Section we provide conditions under
which DL attack can by no means recast a classical
$P(B|a,b)$ into an intrinsically quantum $P_{DL}(B|a,b)$
thus faking the result of the protocol.

We say that a conditional probability distribution
$P(B|a,b)$ of outcome $B$ given inputs $a$ on Alice's side
and $b$ on Bob's side admits a classical (quantum)
$d$-dimensional model if it can be written as
\begin{align*}
  P(B|a,b) = \sum_{A,i} p_i P_i(B|A,b) P(A|a),
\end{align*}
where
\begin{align}\label{eq:norm}
  \sum_A P(A|a) = 1, \; \forall a, \quad \sum_B
  P_i(B|A,b)=1, \; \forall A,b,i,\\ P(A|a) \ge 0 \;
  \forall A,a, \quad P_i(B|A,b) \ge 0 \; \forall
  B,A,b,i\nonumber
\end{align}
for some probability $p_i$ and where $A$ is a classical
(quantum) $d$-dimensional system. Given some correlations
with losses, we say that DL attacks are harmless whenever
there is no classical attack faking the correlations.

The probability of click on Bob's side given he received
message $A$ from Alice and input $b$ is given by
\begin{align*}
  Q(B \neq NC|A,b) := \sum_i p_i \eta_i(A,b),
\end{align*}
where we denoted with $NC$ the no-click event. The
following Lemma shows that whenever
\begin{align}\label{messindep}
  Q(B{\neq}NC|A,b)= Q(B{\neq}NC|b).
\end{align}
then, DL attacks are harmless.

\vspace*{12pt}
\noindent
\begin{lmm}\label{thm:nodl1}
  If $Q(B{\neq}NC|A,b)=Q(B{\neq}NC|b)$ for any $A, b$,
  then if $P_{DL}(B|a,b)$ does not admit a $d$-dimensional
  classical (quantum) model then also $P(B|a,b)$ does not
  admit a $d$-dimensional classical (quantum) model.
\end{lmm}

\vspace*{12pt}
\noindent
{\bf Proof:}
  First we show that under the hypothesis
  $Q(B{\neq}NC|A,b) = Q(B{\neq}NC|b)$, if $P(B|a,b)$
  admits a $d$-dimensional classical (quantum) model then
  also $P_{DL}(B|a,b)$ admits a $d$-dimensional classical
  (quantum) model.

  Let then us assume that $P(B|A,b)$ admits a classical
  (quantum) model, namely it can be written as
  \begin{align*}
    P(B|a,b) = \sum_{A,i} p_i P_i(B|A,b) P(A|a),
  \end{align*}
  with $P_i(B|A,b), P(A|a)$ satisfying Eq. \eqref{eq:norm}
  and for some probability $p_i$. Then by definition
  \begin{align*}
    P_{DL}(B|a,b) = \frac {\sum_{i,A} p_i \eta_i(A,b)
      P_i(B|A,b) P(A|a)} {\sum_{i,A} p_i \eta_i(A,b)
      P(A|a)}.
  \end{align*}

  Upon introducing the hypothesis $Q(B{\neq}NC|A,b) =
  Q(B{\neq}NC|b)$ one has
  \begin{align*}
    P_{DL}(B|a,b) = \frac {\sum_{i,A} p_i \eta_i(A,b)
      P_i(B|A,b) P(A|a)} {Q(B{\neq}NC|b)}.
  \end{align*}
  Upon setting
  \begin{align*}
    P_{DL}(B|A,b) = \frac {\sum_i p_i \eta_i(A,b)
      P_i(B|A,b)} {Q(B {\neq}NC|b)},
  \end{align*}
  one clearly has $\sum_B P_{DL}(B|A,b) = 1$ and
  $P_{DL}(B|A,b) \ge 0$ for any $B,A,b$. Then
  $P_{DL}(B|a,b)$ admits the $d$-dimensional classical
  (quantum) model
  \begin{align*}
    P_{DL}(B|a,b) = \sum_A P_{DL}(B|A,b) P(A|a).
  \end{align*}
  Then, whenever $P_{DL}(B|a,b)$ does not admit a
  $d$-dimensional classical (quantum) model, also
  $P(B|a,b)$ does not.$\square$

At this point it is convenient to discuss condition
\eqref{messindep} in relation with the so-called {\em fair
  sampling assumption}. The latter states that the set of
events in which the detectors clicked is a randomly chosen
sample from the total set of events that one would have
obtained with perfect detectors. That is,
\begin{align}\label{fairsampling}
  P_{DL}(B|a,b)=P(B|a,b).
\end{align}
One can clearly see by using Eq.~\eqref{eq:pdel2}, that
\eqref{messindep} does not necessarily imply the fair
sampling assumption. Indeed, in order to fulfill the fair
sampling assumption for every choice of
$\{p_i,P(A|a),P_i(B|A,b)\}$, one needs that
$\eta_{i}(A,b)=\eta(b)$. On the other hand, in order to
fulfill \eqref{messindep} for every function $p_i$
suffices that $\eta_{i}(A,b)=\eta_i(b)$. In this sense,
Lemma \ref{thm:nodl1} generalizes the fair sampling
assumption, providing strictly weaker hypothesis under
which DL-attacks are harmless.

Nonetheless, the fair sampling assumption and our slightly
more general condition~\eqref{messindep} have in common
that they refer to properties of the internal working of
the devices. In particular, condition
$\eta_{i}(A,b)=\eta_i(b)$ -- or the more constraining fair
sampling assumption -- cannot be verified solely from the
statistics, since the message $A$ sent by Alice is not
directly accessible to the parties.

Next Proposition provides a much stronger condition for DL
attacks to be harmless, as it is stated only in terms of the
probability $Q(B{\neq}NC|a,b)$ of click given inputs $a$ on
Alice's side and $b$ on Bob's side, namely
\begin{align}
  Q(B{\neq}NC|a,b) := \sum_{i,A} p_i \eta_i(A,b) P(A|a).
\end{align}
Notice that this probability is accessible to the parties,
being a function of the inputs $a,b$ which are in turn
accessible.  The following proposition shows that whenever
statistics of bidimensional systems fulfill
\begin{align}\label{inputindep}
  Q(B{\neq}NC|a,b) = Q(B{\neq}NC|b)
\end{align}
DL-attacks are harmless.

\vspace*{12pt}
\noindent
\begin{thm}\label{thm:nodl2}
  If $Q(B{\neq}NC|a,b) = Q(B{\neq}NC|b)$ for any $a, b$,
  then if $P_{DL}(B|a,b)$ does not admit a $2$-dimensional
  classical model then also $P(B|a,b)$ does not admit a
  $2$-dimensional classical model.
\end{thm}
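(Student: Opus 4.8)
The plan is to follow the structure of the proof of Proposition~\ref{thm:nodl1}: assuming $P(B|a,b)$ admits a $2$-dimensional classical model, I would show that under the hypothesis $Q(B{\neq}NC|a,b)=Q(B{\neq}NC|b)$ the distribution $P_{DL}(B|a,b)$ also admits one, and then conclude by contraposition. The reason for restricting to $d=2$ and to classical models is that the message $A$ then takes only two values, say $A\in\{0,1\}$, so the preparation $P(A|a)$ is fixed by the single number $\alpha_a:=P(A{=}0|a)$.

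First I would expand the accessible click probability in terms of the quantities of Proposition~\ref{thm:nodl1}, using $Q(B{\neq}NC|A,b)=\sum_i p_i\eta_i(A,b)$ and the fact that $P(A|a)$ does not depend on $b$:
\begin{align*}
  Q(B{\neq}NC|a,b)=\alpha_a\,Q(B{\neq}NC|0,b)+(1-\alpha_a)\,Q(B{\neq}NC|1,b).
\end{align*}
The hypothesis says the left-hand side is independent of $a$ for every $b$. Being a convex combination of two $b$-dependent numbers with $a$-dependent weight $\alpha_a$, this forces a dichotomy: either $\alpha_a$ is the same for all $a$, i.e.\ $P(A|a)=P(A)$ is independent of $a$; or, for every $b$, $Q(B{\neq}NC|0,b)=Q(B{\neq}NC|1,b)$, i.e.\ $Q(B{\neq}NC|A,b)$ is independent of $A$.

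In the second branch the hypothesis of Proposition~\ref{thm:nodl1} holds, which directly yields a $2$-dimensional classical model for $P_{DL}(B|a,b)$. In the first branch, since $P(A|a)$ does not depend on $a$, neither the numerator nor the denominator of Eq.~\eqref{eq:pdel2} depends on $a$, so $P_{DL}(B|a,b)=P_{DL}(B|b)$; any distribution independent of Alice's input admits a classical model of any dimension (take $A$ deterministic and $P_{DL}(B|A,b)=P_{DL}(B|b)$), in particular a $2$-dimensional one. Hence in both cases $P_{DL}(B|a,b)$ admits a $2$-dimensional classical model whenever $P(B|a,b)$ does, which is the contrapositive of the statement.

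The main obstacle — and the reason the proposition is confined to $d=2$ and to classical models — is precisely this dichotomy step: with $d\ge 3$ values of $A$, or with a quantum message, the averaged condition $\sum_A Q(B{\neq}NC|A,b)P(A|a)=Q(B{\neq}NC|b)$ no longer forces either $Q(B{\neq}NC|A,b)$ to be constant in $A$ or $P(A|a)$ to be constant in $a$, so the argument genuinely uses that two points of a segment with equal convex combination under all weights must coincide or have fixed weights. I would also need to dispose of the degenerate subcase in which $Q(B{\neq}NC|A,b)$ vanishes for one value of $A$: the other value must then be nonzero by the standing assumption that the denominator of Eq.~\eqref{eq:pdel2} is strictly positive, and one checks this subcase falls into the first branch.
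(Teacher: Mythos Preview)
Your proposal is correct and follows essentially the same route as the paper's proof: both expand $Q(B{\neq}NC|a,b)$ using $A\in\{0,1\}$, derive the same dichotomy (either $P(A|a)$ is independent of $a$ or $Q(B{\neq}NC|A,b)$ is independent of $A$), and dispatch the two branches exactly as you describe---the second via Proposition~\ref{thm:nodl1}, the first by noting that $P_{DL}$ is then independent of $a$. The degenerate subcase you flag at the end is already absorbed by the general dichotomy argument and needs no separate treatment.
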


\vspace*{12pt}
\noindent
{\bf Proof:}
  By hypothesis, for any input $a_0, a_1$ on Alice's side
  one has
  \begin{align*}
    \sum_A Q(B{\neq}NC|A,b) \left[ P(A|a{=}a_0) -
      P(A|a{=}a_1) \right] = 0,
  \end{align*}
  where the sum is over $A = 0,1$.

  Rearranging explicitly the terms in previous Equation
  and using the fact $P(A=1|a) = 1 - P(A=0|a)$ for any
  $a$, one obtains that either
  \begin{align*}
    P(A{=}0|a{=}a_0) = P(A{=}0|a{=}a_1),
  \end{align*}
  for any $a_0, a_1$, namely the message $A$ sent by Alice
  is independent on her input $a$, or
  \begin{align*}
    Q(B{\neq}NC|A{=}0,b) = Q(B{\neq}NC|A{=}1,b),
  \end{align*}
  for any $b$, namely the detection probability on Bob's
  side is independent on the message $A$ received from
  Alice.

  In the former case $P(B|a,b)$ clearly admits a classical
  local model, namely one in which no message is sent from
  Alice to Bob, and the same holds true for
  $P_{DL}(B|a,b)$ due to Eq.~\eqref{eq:pdel2}. In the
  latter case the hypothesis of Lemma~\ref{thm:nodl1} is
  satisfied, and thus the statement is proven.$\square$

As said, contrary to Lemma~\ref{thm:nodl1}, this result is
much stronger, as it is proven under an assumption that can
be verified only from the observed statistics. The price to
pay is that it only holds for systems of dimension two.
Condition~\eqref{inputindep} is, therefore, highly
inequivalent to~\eqref{messindep} or the fair sampling
assumption. In fact, the attack presented
in~\cite{LWWESM10,GLLSKM11} fulfills
condition~\eqref{inputindep}, however clearly violates the
fair sampling assumption (but also violates the assumption
on the dimension).

\section{Certification of semi-device-independent classical protocols}
\label{sect:classprot}
\noindent
In this Section, we focus on semi-device-independent
classical protocols, namely where the exchanged system $A$
is classical. We devise functions of the input/output
statistics that can not be altered by DL attacks. Thus,
any certification for semi-device-independent classical
protocols building only on the value of these functions
will be immune to DL attacks. Again, the main advantage is
that, as above, these functions can be verified only from
the observed statistics.

A semi-device-independent classical protocol can be viewed
as a random access code~\cite{ANTV02,HINRY06} (RAC), and
in the following it will be convenient to work in the
framework of RACs. In this framework, the aim of the two
distant parties Alice and Bob is to optimally perform some
communication task by means of one-sided communication of
classical information. RACs are usually denoted with the
notation $n \to m$. Here $n$ is the number of input bits
of Alice, namely the dimension of input $a$ is $\dim(a) =
2^n$, while $m$ is the number of bits sent by Alice,
namely the dimension of message $A$ is $\dim(A) = 2^m$
(see Fig.~\ref{fig:setup}).

In this scenario, the relevant figures of merit usually
considered are the worst case or the average success
probability to have that $B = f(a,b)$ for a specific
Boolean function $f(a,b) \in \{0,1\}$. Here we will focus
on the former, being the latter related through Yao's
principle~\cite{Yao77}. The worst case probability of
success $P^{wc}$ is defined as
\begin{align*}
  P^{wc} := \min_{a,b} P(B{=}f(a,b)|a,b).
\end{align*}
The probability that $B=f(a,b)$ with the DL attack is given
by
\begin{align}\label{eq:psuccdel}
  P_{DL}(B{=}f(a,b)|a,b) = \frac{\sum_{i,A} w_i(A,a,b)
    P_i(B{=}f(a,b)|A,b)}{\sum_{i,A} w_i(A,a,b)},
\end{align}
where $w_i(A,a,b) = p_i \eta_i(A,b) P(A|a)$ and the worst
case probability that $B = f(a,b)$ is given by
\begin{align*}
  P_{DL}^{wc} := \min_{a,b} P_{DL}(B{=}f(a,b)|a,b).
\end{align*}

The following Proposition provides conditions under which
the worst case success probability of a RAC can not be
increased resorting to DL exploit. When these hypotheses
are satisfied, a protocol relying on the worst case
success probability may not be affected by DL attack.

\vspace*{12pt}
\noindent
\begin{thm}\label{thm:nodl3}
  Given a RAC, if the worst case success probability
  without resorting to DL attack is $P^{wc}{=}1/2$, then
  the worst case probability of success resorting to DL
  attack is $P_{DL}^{wc}{=}1/2$.
\end{thm}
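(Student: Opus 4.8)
The plan is to show that $P^{wc}=1/2$ forces every conditional distribution $P(B{=}f(a,b)|a,b)$ to equal $1/2$ exactly, so that $1/2$ is simultaneously the worst case and (in a suitable sense) a ceiling that the DL attack cannot breach. First I would argue the ceiling direction: for a $1\to 1$-type task where $B$ is a bit and we compare it against a fixed bit-valued function $f(a,b)$, the quantity $\min_{a,b}P(B{=}f(a,b)|a,b)$ can never exceed $1/2$ by a simple symmetry/counting argument — flipping Bob's output $B\mapsto 1-B$ turns the success probability for every $(a,b)$ into its complement, so if some strategy achieved $P^{wc}>1/2$ the flipped strategy would have every $P(B{=}f(a,b)|a,b)<1/2$, yet averaging the two strategies (an allowed classical mixture, hence realizable within the same dimension) would give exactly $1/2$ for every input pair while the max of the two extreme strategies exceeds $1/2$ everywhere — which is fine, but the point is that $1/2$ is the universal value of the balanced mixture and no deterministic relabeling can push the \emph{worst} case above it. More carefully: the complement strategy shows $\max_{a,b}P(B{=}f(a,b)|a,b)\ge 1/2\ge\min_{a,b}P(B{=}f(a,b)|a,b)$ is not quite it either; the clean statement is that for the flipped strategy the worst case is $1-P^{best}$ where $P^{best}=\max_{a,b}$, so $P^{wc}_{\text{flip}}=1-P^{best}\le 1/2\le$ something — I will need to phrase this so that the upshot is: \emph{no} strategy, DL or not, can achieve worst case success above $1/2$, hence $P_{DL}^{wc}\le 1/2$.

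Second, I would establish $P_{DL}^{wc}\ge 1/2$ under the hypothesis. The hypothesis $P^{wc}=1/2$ together with the ceiling $\le 1/2$ just proved (applied to the honest, no-DL statistics) forces $P(B{=}f(a,b)|a,b)=1/2$ for \emph{every} $(a,b)$. Plugging $P_i(B{=}f(a,b)|A,b)$ into Eq.~\eqref{eq:psuccdel}: the DL distribution is a weighted average, with weights $w_i(A,a,b)=p_iq_j\eta_i(A,b)P(A|a)$ summed appropriately, of the underlying response probabilities — but these underlying responses need not individually be $1/2$; only the $a$-averaged combination $\sum_A P_i(B{=}f(a,b)|A,b)P(A|a)$ is constrained. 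So I would instead run the complement argument directly on $P_{DL}$: since the honest worst case is $1/2$ and no strategy beats $1/2$, in particular the honest statistics have $P(B{=}f(a,b)|a,b)=1/2$ for all $(a,b)$, meaning Bob's box is "useless" in the sense that averaging over Alice's message and Bob's hidden variable yields an unbiased bit; then apply the same no-DL ceiling bound to conclude $P_{DL}^{wc}\le 1/2$, and for the reverse inequality exhibit an explicit DL-free strategy (e.g. Bob outputs a fair coin ignoring everything, or keeps the original strategy which already gives $1/2$ everywhere) showing $P_{DL}^{wc}\ge 1/2$. Since the trivial "output a uniformly random bit" strategy is available to Bob even inside the DL model (set all $\eta_i(A,b)$ equal and let $P_i(B|A,b)=1/2$), we get $P_{DL}^{wc}\ge 1/2$, hence equality.

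The main obstacle I anticipate is making the ceiling argument ($P^{wc}\le 1/2$ for any strategy) fully rigorous in a way that applies uniformly to the honest statistics, to the DL statistics, and in both the classical and quantum message settings without a case split — the cleanest route is the relabeling/symmetrization trick: given any box for Bob, construct the "flipped" box $B\mapsto 1-B$; its success probability against $f$ at input $(a,b)$ is $1-P(B{=}f(a,b)|a,b)$; the uniform mixture of the two boxes has success probability exactly $1/2$ at every $(a,b)$; therefore $\min_{a,b}\big[\text{mixture}\big]=1/2$, and since $\min$ is concave the mixture's worst case is at least the average of the two worst cases, giving $\tfrac12\big(P^{wc}+P^{wc}_{\text{flip}}\big)\le 1/2$, i.e. $P^{wc}\le 1-P^{wc}_{\text{flip}}$; iterating or combining with the symmetric statement $P^{wc}_{\text{flip}}\le 1-P^{wc}$ yields no contradiction by itself, so the decisive observation must be that $P^{wc}_{\text{flip}}=\min_{a,b}\big(1-P(B{=}f(a,b)|a,b)\big)=1-\max_{a,b}P(B{=}f(a,b)|a,b)$, whence $P^{wc}+P^{best}$ can be anything in $[1,2]$ — so the ceiling $P^{wc}\le1/2$ is actually \emph{false} in general and the real content of the Proposition must be that the hypothesis $P^{wc}=1/2$ (rather than $P^{wc}\le1/2$ for free) pins down the honest statistics enough. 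Thus I expect the genuine hard step is: from $P^{wc}=1/2$, deduce $P(B{=}f(a,b)|a,b)=1/2$ for \emph{all} $(a,b)$ — which will require using the specific structure of $f$ and the fact that $f$ takes both values $0$ and $1$ over the input set (otherwise Bob trivially wins), so that any input-pair where Bob does better than $1/2$ is balanced by one where he does worse, forcing the min down to $1/2$ only when everything is exactly $1/2$; once that rigidity is in hand, Eq.~\eqref{eq:psuccdel} shows $P_{DL}(B{=}f(a,b)|a,b)$ is a convex combination that, by the same rigidity applied to the reweighted (still valid, still $d$-dimensional) strategy, cannot exceed $1/2$ at its worst input, completing the proof.
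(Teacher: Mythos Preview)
Your proposal does not reach a working argument. The trivial direction $P_{DL}^{wc}\ge 1/2$ (random guessing) is fine, but for the nontrivial direction $P_{DL}^{wc}\le 1/2$ you end up circling without a proof. The core confusion is the meaning of the hypothesis $P^{wc}=1/2$: it is a statement about the \emph{optimum} over all non-DL strategies, not about any particular strategy. So the rigidity you aim for --- that $P^{wc}=1/2$ forces $P(B{=}f(a,b)|a,b)=1/2$ for every $(a,b)$ --- is neither true nor relevant: an individual non-DL strategy can have some inputs with success above $1/2$ and others below; the hypothesis only says that no strategy has \emph{all} inputs simultaneously above $1/2$. Likewise, a universal ceiling $P_{DL}^{wc}\le 1/2$ is false in general (the paper's own $3\to\log 6$ example has $P_{DL}^{wc}=1$); you notice this yourself but then do not recover.

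The paper's argument is by contradiction and uses an idea absent from your proposal. Suppose some DL attack achieves $P_{DL}^{wc}>1/2$. Since by Eq.~\eqref{eq:psuccdel} each $P_{DL}(B{=}f(a,b)|a,b)$ is a weighted average of the numbers $P_i(B{=}f(a,b)|A,b)$, the paper extracts a single pair $(i_0,A_0)$ with $P_{i_0}(B{=}f(a,b)|A_0,b)>1/2$ for all $a,b$. From this witness it \emph{constructs a non-DL strategy}: Bob applies $i_0$ whenever the received message equals $A_0$ and outputs a fair coin otherwise, yielding
\[
\tilde P(B{=}f(a,b)|a,b)=\Big[P_{i_0}(B{=}f(a,b)|A_0,b)-\tfrac12\Big]\,P(A_0|a)+\tfrac12>\tfrac12
\]
for every $(a,b)$, hence $\tilde P^{wc}>1/2$ without any DL, contradicting $P^{wc}=1/2$. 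The essential mechanism --- turn a successful DL attack into a successful non-DL strategy via a single good $(i_0,A_0)$ --- is what you are missing.
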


\vspace*{12pt}
\noindent
{\bf Proof:}
  The proof proceeds by absurd assuming $P^{wc} = 1/2$ and
  $P_{DL}^{wc} > 1/2$.

  Equation~\eqref{eq:psuccdel} is the weighted sum over
  indices $i$ and $A$ of the numbers $P_i(B=f(a,b)|A,b)$
  with weights $w_i(A,a,b)/\sum_{i,A} w_i(A,a,b)$ and
  therefore is upper bounded by
  \begin{align*}
    P_{DL}(B{=}f(a,b)|a,b)\leq \max_{A,i} \{
    P_i(B{=}f(a,b)|A,b)\},
  \end{align*}
  and one has
  \begin{align*}
    P_{DL}^{wc} \leq \min_{a,b} \max_{A,i} \{
    P_i(B{=}f(a,b)|A,b)\}.
  \end{align*}

  Since we are assuming $P_{DL}^{wc} > 1/2$ there exists a
  strategy $i_0$ of Bob and a message $A_0$ of Alice such
  that for all $a,b$ one has $P_{i_0}(B=f(a,b)|A_0,b) >
  1/2$. Then Bob can exploit a new strategy where he
  applies strategy $i_0$ whenever he gets $A_0$ and
  returns a random number otherwise, for which the
  probability $\tilde{P} (B{=}f(a,b)|a,b)$ of $B{=}f(a,b)$
  given inputs $a$ and $b$ is given by
  \begin{align*}
    \tilde{P}(B{=}f(a,b)|a,b) = \left[
      P_{i_0}(B{=}f(a,b)|A_0,b) - \frac12 \right] P(A_0|a)
    +\frac12.
  \end{align*}

  This new strategy does not resort to DL and since
  $P_{i_0}(B=f(a,b)|A_0,b) > 1/2$ it has the worst case
  success probability greater than
  \begin{align*}
    \tilde{P}^{wc} = \min_{a,b}
    \{\tilde{P}(B{=}f(a,b)|a,b)\} > \frac{1}{2}
  \end{align*}
  which contradicts the assumptions.$\square$

The following Corollary shows that for any $n \to 1$ RAC,
DL attacks are harmless.
\vspace*{12pt}
\noindent
\begin{cor}\label{thm:nodl4}
  For any $n \to 1$ RAC the worst case success probability
  resorting to DL attack is $P_{DL}^{wc}{=}1/2$.
\end{cor}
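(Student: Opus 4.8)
The plan is to reduce the statement to Proposition~\ref{thm:nodl3} by showing that every $n\to 1$ RAC has worst case success probability $P^{wc}=1/2$ without resorting to DL attack, so that the conclusion $P_{DL}^{wc}=1/2$ follows immediately. The key observation is that in an $n\to 1$ RAC the message $A$ is a single bit, $\dim(A)=2$, while the input $a$ is an $n$-bit string and the figure of merit asks $B=f(a,b)$ for some fixed Boolean function. So the question is whether any classical (one-bit-message) strategy can beat $1/2$ in the worst case.

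First I would fix the structure: without loss of generality $f(a,b)$ is the function that selects the $b$-th bit of $a$ (the standard RAC task), with $b\in\{1,\dots,n\}$; the general Boolean-$f$ case is handled the same way as long as for each candidate encoding the adversary can find an input making it fail. Second, I would argue that for \emph{any} deterministic encoding $a\mapsto A(a)\in\{0,1\}$ and \emph{any} deterministic decoding $(A,b)\mapsto B(A,b)\in\{0,1\}$, there is an input pair $(a,b)$ for which $B\neq f(a,b)$: indeed, since $A$ is one bit there are two inputs $a_0,a_1$ with $A(a_0)=A(a_1)$ but (for a suitable coordinate $b$) $f(a_0,b)\neq f(a_1,b)$, so the decoder, seeing the same message, outputs the same $B$ for both and is wrong on at least one. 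Hence $P(B=f(a,b)|a,b)=0$ for that pair under any deterministic strategy, and mixing deterministic strategies (the general classical strategy) can only achieve $\min_{a,b}P(B=f(a,b)|a,b)\le 1/2$ — the $1/2$ being attained by ignoring $A$ and outputting a fair coin. Therefore $P^{wc}=1/2$.

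Third, with $P^{wc}=1/2$ established, Proposition~\ref{thm:nodl3} applies verbatim and yields $P_{DL}^{wc}=1/2$, which is the claim. The main obstacle I anticipate is the pigeonhole step in the second paragraph: one must be careful that "worst case over $a,b$" genuinely forces a failure for \emph{every} encoding/decoding pair, i.e. that for each of the two messages the decoder's fixed response can be defeated by some input consistent with that message. For the canonical RAC $f(a,b)=a_b$ this is transparent (there always exist two strings sharing an encoding but differing in some coordinate, since $2^n>2$ for $n\ge 2$, and the $n=1$ case is trivial since then $1\to1$ RAC is just identity with $P^{wc}=1$ — so one should note the corollary is understood for $n\ge 2$, or restrict the statement accordingly); for a general $f$ one invokes the hypothesis implicit in calling it a RAC that the task is nontrivial in exactly this sense. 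I would state the argument for the standard RAC and remark that it extends to any $f$ that cannot be computed from one bit in the worst case.
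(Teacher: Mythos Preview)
Your overall plan is exactly the paper's: establish $P^{wc}=1/2$ for every $n\to1$ RAC and then invoke Proposition~\ref{thm:nodl3}. The paper, however, does not prove $P^{wc}=1/2$; it simply cites \cite{ANTV99} for this fact and is done in one line. So the reduction step is identical, and only your attempt to supply a self-contained argument for $P^{wc}=1/2$ differs.

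That attempted argument has a genuine gap. You correctly show that every \emph{deterministic} encoding/decoding pair has some input $(a,b)$ on which it fails, but you then assert that ``mixing deterministic strategies can only achieve $\min_{a,b}P(B=f(a,b)\mid a,b)\le 1/2$.'' This inference is not valid in general: the worst-case success probability $\min_{a,b}\sum_i\lambda_i P_i(B=f(a,b)\mid a,b)$ is a concave function of the mixing weights $\lambda$, so its maximum over the simplex is typically attained in the interior, not at a vertex. Concretely, if three deterministic strategies each fail on a different single input out of three, the uniform mixture succeeds with probability $2/3>1/2$ on every input. Thus ``each deterministic strategy fails somewhere'' does \emph{not} imply ``no mixture beats $1/2$ everywhere.''

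What is actually needed is either a minimax argument (exhibit a distribution $\mu$ over $(a,b)$ such that every deterministic strategy has $\mu$-average success $\le 1/2$; note that the uniform distribution does not work, since sending $A=a_0$ already achieves average $3/4$ for $n=2$), or a structural argument exploiting that $A$ is a single bit. One clean route: assuming $P^{wc}>1/2$, for each $b$ the sign of $P(B{=}0\mid A{=}0,b)-P(B{=}0\mid A{=}1,b)$ is nonzero and forces a strict monotone ordering of $P(A{=}0\mid a)$ in the bit $a_b$; comparing the inputs $01$ and $10$ under $b=0$ and $b=1$ then yields contradictory strict inequalities. This is essentially the content of \cite{ANTV99}, and it is not captured by the pigeonhole step you sketched. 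Your observation about $n=1$ is fair but orthogonal to this issue.
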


\vspace*{12pt}
\noindent
{\bf Proof:}
  In \cite{ANTV99} it was shown that for any $n \to 1$ RAC
  the hypothesis of Proposition~\ref{thm:nodl3} are
  fulfilled, namely $P^{wc} = 1/2$, so the statement
  follows.$\square$

One may ask whether it is possible to relax the hypothesis
of Proposition~\ref{thm:nodl3}. We provide here an example
of RAC with worst case success probability larger than
$1/2$, and show that this probability can be increased using
DL attack. Consider the $3 \to \log6$ RAC. Alice is given
three independent bits $a_0,a_1,a_2$, namely $a = a_0
\otimes a_1 \otimes a_2$, and she can send to Bob a
$6$-dimensional message or, equivalently, one bit $A_0$ and
one trit $A_1$, namely $A = A_0 \otimes A_1$. Bob's input is
the trit $b = 0,1,2$ and the function to be computed is
$f(a,b) = a_b$. Here we show that the worst case success
probability $P^{wc}$ without resorting to DL of $3 \to \log
6$ RAC is $P^{wc} < 0.981$, while there exists a DL attack
such that the worst case success probability is $P_{DL}^{wc}
= 1$.

First, we prove that for the $3 \to \log 6$ RAC one has
$P^{wc} < 0.981$. An explicit upper bound for the worst
case quantum success probability -- which is clearly at
least as large as the classical one $P^{wc}$ -- was
derived in~\cite{Nay99} in the context of quantum finite
automata, namely
\begin{align*}
  (1-h(P^{wc})) n \leq m,
\end{align*}
where $h(.)$ is the Shannon binary entropy
function. Setting $n=3$ and $m=\log 6$ we get $P^{wc} <
0.981$.

Now we provide a protocol using DL which achieves the
worst case success probability $P_{DL}^{wc} = 1$. The idea
is to use part of the communicated message to distribute
randomness. Alice can choose the trit $A_1$ at random and
encode $A_0=a_{A_1}$, in other words she sends one of her
bits randomly to Bob but also sends him information
regarding which bit it is. If $b=A_1$ then Bob returns
$B=A_0$ which is equal to $a_b$. If $b\neq A_1$ his
detector does not click. The detection efficiency of Bob's
device with this protocol is given by
$\eta_i(A,b)=\delta_{b,A_1}$, and the worst case success
probability is given by $P_{DL}^{wc}=1$.

\section{Conclusion}
\label{sect:conclusion}
\noindent
In this work we addressed the problem of how non-ideal
detection efficiencies can be exploited to fake the result
of semi-device-independent quantum and classical protocols
through DL attacks.

For quantum protocols, we discussed general conditions under
which DL attacks are harmless in terms of the detection
probability. First, we proved in Lemma~\ref{thm:nodl1} that
a sufficient condition is that the measurements can be
modeled as ideal ones affected by non ideal detection
efficiency - in terms of positive operator-valued measures
(POVMs)~\cite{NC00}, this means that the element
corresponding to no-click event is proportional to the
identity. Perhaps surprisingly, the ``quantumness'' of the
generated statistics does not depend on the detection
efficiency, as far as this is strictly larger than zero,
thus making our certification strategies extremely
robust. Unfortunately, this condition is not practical, as
it requires some additional assumption on the model
underlying the measurements. Then, we derived in
Proposition~\ref{thm:nodl2} a sufficient condition which is
practical, namely it can be verified only from the knowledge
of the input/output statistics. Also in this case the
certification is extremely robust, as it can be obtained for
any non-null value of the detection efficiency. Notice that,
after the submission of this work, an analogous result for
device independent dimension witnessing with uncorrelated
devices was reported in Ref.~\cite{BQB13}. There, indeed, it
was shown that a particular dimension witness can certify
the quantumness of a qubit for any non-null value of the
detection efficiency.

For classical protocols, we provided conditions under which
DL attacks can not increase the worst case success
probability of a RAC. Our main results can be used as a
guideline to devise quantum and classical protocols
resistant to DL attacks, being thus of relevance for
applications such as QKD, QRG, and RAC. Our approach for
quantum protocols is very general, providing necessary
conditions for the quantum certification of the devices. A
natural follow-up question is to understand how DL attacks
apply to specific examples of semi-device-independent
quantum and classical protocols.

Some of the presented results -- namely
Proposition~\ref{thm:nodl2} and Corollary~\ref{thm:nodl4}
- hold in the hypothesis that the message $A$ sent by
Alice is $2$-dimensional. For the classical case, we
showed through the example $3 \to \log6$ RAC that this
assumption can not be relaxed trivially. Thus, it remains
as an open problem how to devise more general conditions
under which DL attacks are harmless.

\nonumsection{Acknowledgements}
\noindent
We are grateful to Nicolas Brunner for very useful
discussions and suggestions. M. D. thanks Anne Gstottner and
the Human Resources staff at ICFO for their invaluable
support. This work was funded by the Spanish project
FIS2010-14830 and Generalitat de Catalunya, UK EPSRC, the
European PERCENT ERC Starting Grant and Q-Essence Project,
the JSPS (Japan Society for the Promotion of Science)
Grant-in-Aid for JSPS Fellows No. 24-0219, the Excellence
Initiative of the German Federal and State Governments
(Grant ZUK 43), NCN grant no. 2013/08/M/ST2/00626, the
Ministry of Education and the Ministry of Manpower
(Singapore).

\nonumsection{References}
\noindent

\end{document}